\let\pref=\prettyref
\newcommand{\savehyperref}[2]{\texorpdfstring{\hyperref[#1]{#2}}{#2}}
\newcommand{\RR}{\mathbb R}
\newcommand{\argmax}{\operatorname{argmax}}
\newcommand{\alg}{{\operatorname{ALG}}}
\newcommand{\opt}{{\operatorname{OPT}}}
\newcommand{\argmin}{{\operatorname{argmin}}}
\renewcommand{\min}{{\operatorname{min}}}
\renewcommand{\max}{{\operatorname{max}}}
\newif\ifcomments
\newcommand{\remove}[1]{}
\newif\ifshowqed
\begin{document}
\title{Welfare Maximization with Deferred Acceptance Auctions in Reallocation Problems}
\author{
Anthony Kim\thanks{Supported in part by an NSF Graduate Research Fellowship.}}

\institute{
Department of Computer Science, Stanford University, USA \\ 
\email{tonyekim@stanford.edu}}

\maketitle

\begin{abstract}
We design approximate weakly group strategy-proof mechanisms for resource reallocation problems using Milgrom and Segal's deferred acceptance auction framework: the radio spectrum and network bandwidth reallocation problems in the procurement auction setting and the cost minimization problem with set cover constraints in the selling auction setting. Our deferred acceptance auctions are derived from simple greedy algorithms for the underlying optimization problems and guarantee approximately optimal social welfare (cost) of the agents retaining their rights (contracts). In the reallocation problems, we design procurement auctions to purchase agents' broadcast/access rights to free up some of the resources such that the unpurchased rights can still be exercised with respect to the remaining resources. In the cost minimization problem, we design a selling auction to sell early termination rights to agents with existing contracts such that some minimal constraints are still satisfied with remaining contracts. In these problems, while the ``allocated'' agents transact, exchanging rights and payments, the objective and feasibility constraints are on the ``rejected'' agents. 
\end{abstract}

\section{Introduction}\label{sec:intro}

Motivated by the US government's effort to reallocate channels currently allocated for television broadcasting for wireless broadband services, Milgrom and Segal~\cite{MS14} introduced a class of mechanisms called deferred acceptance (DA) auctions for resource reallocation problems. DA auctions greedily choose an allocation by iteratively rejecting the least attractive bid determined by some scoring functions and can be implemented with adaptive reverse greedy algorithms. Milgrom and Segal showed that DA auctions satisfy several important properties: they are strategyproof, weakly group (WG) strategy-proof, can be implemented using ascending clock auctions, and lead to the same outcomes as the complete-information Nash equilibria of corresponding paid-as-bid auctions.

Subsequently, D\"{u}tting et al.~\cite{DGR14} studied the strengths and limitations of DA auctions with respect to achievable approximation guarantees on social welfare in two selling auction design problems. For the knapsack auction problem, they showed a separation between approximation guarantees by DA auctions and by more general strategyproof mechanisms. For the combinatorial auction problem with single-minded bidders, they designed an $O(d)$-approximate DA auction when bidders' desired bundles' sizes are at most $d$ and an $O(\sqrt{m \log m})$-approximate DA auction where $m$ is the number of items. In a different work, D\"{u}tting et al.~\cite{DRT14} studied double auctions for settings in which unit-demand buyers and unit-supply sellers must be matched one-to-one subject to certain constraints. In particular, they showed WG-strategy-proof DA double auctions can be designed by composing two greedy algorithms, one for each side, that use DA rules.

In this paper, we further develop connections between DA auctions and greedy algorithms in the context of the resource reallocation problems that motivated Milgrom and Segal~\cite{MS14}'s DA auction framework. More specifically, we show that several simple greedy approximation algorithms lead to DA auctions with the same approximation guarantees. We consider welfare maximization in the radio spectrum and network bandwidth reallocation problems and the cost minimization problem with set cover constraints.

In the radio spectrum reallocation problem, we (the government) want to reallocate channels currently allocated for television broadcasting for wireless broadband services, effectively reducing the number of channels available for broadcasting, by buying the television stations' broadcasting rights. The reallocation process involves purchasing some of the rights and reassigning the remaining stations with rights into a smaller set of channels; the cleared spectrum will, then, be used for wireless broadband services. The reassignment should be accomplished in a way that respects constraints stipulating that two interfering neighboring stations are not assigned to the same channel. Assuming the stations bid their values of keeping broadcast rights, we want to maximize the social welfare of the stations to keep their rights.

Similarly, in the network bandwidth reallocation problem, we (the network operator) want to reallocate some network connections currently in use for other purposes. We want to buy access rights of some firms using the network such that the demands of firms still with their rights can be served in a smaller network, with the goal of maximizing the social welfare of these latter firms.

In addition, we consider the cost minimization problem with set cover constraints in the selling auction setting. The bidders are looking to terminate their contracts, by paying penalties if necessary, and we (the government) want to agree to such requests while ensuring that some minimal constraints, modeled by the well-known set cover problem, are satisfied. We sell early termination rights to these bidders with the goal of minimizing the social cost, i.e., the total bid value, of those bidders whose requests are not honored.

\paragraph{Our Contributions}
Using Milgrom and Segal~\cite{MS14}'s DA auction framework, we design approximate DA auctions for reallocation problems, in which the ``allocated'' agents transact, exchanging rights and payments, while the objective and feasibility constraints are on the ``rejected'' agents. 
We show simple forward greedy algorithms, not the reverse kind, are sufficient to derive DA auctions that guarantee approximately optimal social welfare (cost) of agents to retain their rights (contracts).
Our DA auctions are computationally efficient and can be computable in polynomial time. More interestingly, the scoring functions of the auctions are algorithmic in nature and might not be expressible in a closed form; they use helper variables to track the progress of allocation.


In the radio spectrum reallocation problem, we design DA auctions that are approximately optimal for certain interference graphs: interval graphs, disk graphs, and bounded degree-$d$ graphs. Unlike the near-optimality result in \cite{MS14} that relies on the existence of a non-trivial partitioning scheme, our approximation ratios depend solely on simple graph parameters. Disk graphs, in particular, are a natural modeling representation of the interference graphs of the stations' circular broadcast ranges. For the network bandwidth reallocation problem, we design an approximate DA auction with scoring functions derived from a result due to Briest et al.~\cite{BKV11}. For the cost minimization problem with set cover constraints, we show that a well-known primal-dual greedy approximation algorithm can be implemented by a DA auction. 

\paragraph{Future Work}
It would be interesting to investigate whether or not we can further improve the results in this paper. For instance, in the radio spectrum reallocation problem, a better performing DA auction with more complex scoring functions might be possible on different kinds of interference graphs. More generally, it would be interesting to have black box results formalizing the conditions under which greedy algorithms can be implemented as DA auctions and comparison-type results comparing DA auctions to other kinds of auctions with different incentive properties in terms of social welfare/cost.

\paragraph{Other Related Work}
Mechanisms derived from greedy algorithms have been studied previously for other various mechanism design problems, but not within the DA auction framework and without consideration of WG strategy-proofness. We refer to these work and the references therein: Lehmann et al.~\cite{LOS02}, Borodin and Lucier~\cite{BL10, BL10-2}, and Briest at al.~\cite{BKV11}.

\section{Preliminaries}\label{sec:prelim}

We consider procurement auctions in the resource reallocation problems; the setting for selling auctions in the cost minimization problem is equivalent but inverted. 

Let $N$ be the set of bidders. Each bidder submits a bid and the auction decides on the allocation and payments such that the bidders either ``win'' or ``lose'', i.e., his bid to supply an item is accepted or not, and only the winning bidders receive payments. We assume that bids $b = (b_1, \ldots, b_{|N|})$ are from the bid profile space $B = B_1 \times \cdots \times B_{|N|}$, and that each bidder $i$'s value $v_i$ is in the range $[0, \bar{v}_i]$ and his bid $b_i$ is restricted to a finite set $B_i$ such that $\max ~ B_i > \bar{v}_i$.\footnote{The restriction of finite bid spaces can be removed.} 
A procurement auction has allocation rule $a: B \rightarrow 2^{N}$ and payment rule $p: B \rightarrow \RR^{|N|}$ such that $p_i(b) = 0$ for $i \in N \setminus a(b)$. In settings we consider, there will be constraints on the possible allocations allowed. Note each bidder is a strategic agent that seeks to maximize his utility (or payoff) which is $p_i(b) - v_i$ if he wins, and $0$ otherwise. 

We define important properties of auctions: An auction is {\em strategy-proof} if for every bidder $i$, $v_i \in [0, \bar{v}_i]$, and other bids $b_{-i} \in B_{-i}$, it is optimal for the bidder to truthfully bid his value $v^+_i := \min \{ b_i \in B_i: b_i > v_i \}$.
An auction is {\em weakly group (WG) strategy-proof} if for every profile of values $v$, a set of bidders $S \subseteq N$, and coordinated false bids by these deviating bidders, there exists at least one bidder in $S$ who does not get a strictly better payoff than under truthful reporting.
An auction is {\em $\alpha$-approximate} if it achieves at least the $\alpha$ fraction of the optimal welfare in all problem instances ($\alpha < 1$).

In reallocation problems, we want to design a procurement auction to purchase agents' broadcast/access rights to free up some of the resources such that the unpurchased rights can still be exercised with respect to the remaining resources. Agents report their private values of rights and receive payments when their rights are purchased. Our objective is to maximize the social welfare of agents retaining their rights. Note that the ``allocated'' agents transact, exchanging rights and payments, but the objective and feasibility constraints are on the ``rejected'' agents in these auctions.\footnote{\label{foot:common} In more common auction settings, the objective and feasibility constraints are on the ``allocated'' agents.}

\paragraph{Deferred Acceptance Auctions}

Deferred acceptance (DA) auctions are auctions in which the allocation is determined by an iterative process of rejecting bidders one by one from the whole set. Milgrom and Segal~\cite{MS14} showed they satisfy strong incentive guarantees such as strategy-proofness, WG-strategy-proofness, and other useful implementation properties. Note the well-studied Vickrey auction is known to be not WG-strategy-proof.
\begin{algorithm}[t]
\caption{DA Auction with Scoring Functions $\{s^A_i \}_{A \subseteq N, i \in A}$ }\label{alg:da}
\begin{algorithmic}[1]
\State Accept bids $b_1, \ldots, b_{|N|}$
\State $A = N$
\While{$\exists i \in A: s^A_i(b_i, b_{N \setminus A}) > 0$}
	\State $i = \argmax_{i \in A} s^A_i(b_i, b_{N \setminus A})$
	\State $A = A \setminus \{ i \}$
\EndWhile
\State \Return $A$
\end{algorithmic}
\end{algorithm}

DA auctions can be described as reverse greedy algorithms that use certain scoring functions $\{s^A_i \}_{A \subseteq N, i \in A}$ as shown in Algorithm~\ref{alg:da}. Bidders are removed from $A$ one at a time until termination of the while-loop and the resulting $A$ is the allocation. Note the scoring functions $s^A_i: B_i \times B_{N \setminus A} \rightarrow \RR^+$ are nondecreasing in the first argument. The bidder $i$'s score during an iteration with the current set of active bidders $A$ is dependent on its own bid $b_i$ and bids of those inactive bidders in $N \setminus A$, but not on the bids of other active bidders. For the allocation rule $a$ determined by Algorithm~\ref{alg:da}, the corresponding payment rule $p$ is defined:
\begin{equation}\label{eq:da}
p_i(b_{-i}) = \max \{ b'_i \in B_i: i \in a(b'_i, b_{-i}) \} \enspace, 
\end{equation} 
i.e., the winning bidder $i$'s payment is the maximum bid value with which he remains allocated. For more details, see \cite{MS14}.

We study greedy algorithms with allocation rules implementable by DA auctions. These algorithms include the standard greedy-by-weight algorithms that process elements in order of decreasing weight and several primal-dual greedy algorithms that utilize variables from the linear programming formulations of the underlying problems. In this paper, we are primarily interested in the greedy algorithms of the ``single pass'' nature, i.e., those that start with an empty (and feasible) solution and iteratively augment it without any post-processing steps that might undo some part of the solution.

We formalize a notion to capture which greedy algorithms can be implemented as DA auctions. Note we have an analogous definition for reverse greedy algorithms that instead return $A$ as the final solution\footnote{This is for the more common auction setting described in Footnote~\ref{foot:common}.}, but the following version for forward greedy algorithms is sufficient for the reallocation problems:

\begin{definition} \label{def:daimp}
Let $N$ be the element set and $w: N \rightarrow \RR^+$ be a weight function. A greedy algorithm $\alg$ is {\em DA-implementable} if it can be implemented with active set $A$ which is initialized to $N$ and scoring functions $s^A_i: B_i \times B_{N \setminus A} \rightarrow \RR^+$ for each $A \subseteq N, i \in A$ such that:
\begin{enumerate}
	\item In each iteration, element $i$ to be selected and removed from $A$ is the highest scoring element $\argmax_{i \in A} s^A_i(w_i, w_{N \setminus A})$;
	\item When $s^A_i(w_i, w_{N \setminus A}) = 0$ for all $i \in A$, $\alg$ terminates and returns $N \setminus A$.
\end{enumerate}
\end{definition}

\paragraph{Notations}

We use the common notation $-i$ to denote the bidders other than bidder $i$. We use $x(i)$ and $x_i$ interchangeably to indicate the $i$-th component value (representing the $i$-th bidder, the $i$-th edge, etc.) for any vector variable $x$. Without loss of generality, we assume $\argmax$ and $\argmin$ operators return the lowest indexed argument according to a consistent global order in the case of a tie.


\section{Radio Spectrum Reallocation}\label{sec:spectrum}

We design a simple DA auction that achieves near-optimal social welfare for certain classes of interference graphs in the spectrum reallocation problem. Without relying on the assumption of a partitioning scheme in Milgrom and Segal~\cite{MS14}, we show that our DA auction achieves an approximation ratio dependent only on simple structural parameters of the interference graphs.\footnote{Milgrom and Segal's result assumes the existence of a ordered partition of $N$ into $m$ disjoint sets $N_1, \ldots, N_m$ such that: (1) the edge $(i, j)$ exists for each $i, j \in N_k, 1 \leq k \leq m$; and (2) there exists some $d < n$ such that $|S| + | \cap_{i \in S} \cup_{l < k} \{j \in N_l: (i,j) \in E \} | \leq n$ for each $1 \leq k \leq m$ and $S \subseteq N_k$ with $|S| \leq n - d$.} For our analysis, we proceed with a series of reductions from the spectrum reallocation problem to the maximum weight $k$-colorable subgraph problem to the submodular function maximization and maximum weight independent set problems. 

The spectrum reallocation problem is as follows:
Let $G = (V, E)$ be the interference graph where $V$ is the vertex set representing the television stations and $E$ is the edge set where there is an edge between two vertices if the corresponding stations interfere were they to be assigned to the same channel; we use $N$ and $V$ interchangeably. Let $k$ be the number of channels available for reassignment. Given stations' bids $b_1, b_2, \ldots$, we want to find a set $A$ of stations to allocate, or buy their broadcast rights, to maximize the welfare of those retaining their rights such that the subgraph $G(V \setminus A, E)$ induced by them is $k$-colorable, i.e., colorable with $k$ colors. We interpret the bids to be the stations' reported values of retaining the broadcast rights.

We consider the following classes of $G$: interval graphs, disk graphs, and bounded degree-$d$ graphs. The interval graphs and disk graphs are intersection graphs with a geometric representation; in interval (disk) graphs, the vertices represent line-segments (disks) in a 1D-space (2D-space) and an edge exists between two vertices if their corresponding representations intersect or even just touch. In bounded degree-$d$ graphs, the degree of every vertex is at most $d$.

\subsection{Main Results}

We reduce the spectrum reallocation problem to the NP-hard problem of finding the maximum weight $k$-colorable subgraph of $G$: Given $G(V,E)$, $k$ and a weight function $w: V \rightarrow \RR^+$, we want to find the maximum weight subgraph $V' \subset V$ such that $G(V', E)$ is $k$-colorable. As the weight function is derived from the bids $b_1, b_2, \ldots$, we have the same objective in both problems. In particular, if there is a DA-implementable approximate greedy algorithm for the maximum weight $k$-colorable subgraph problem, we get a DA auction with the same performance guarantee on social welfare for the spectrum reallocation problem.

\begin{algorithm}[t]
\caption{Greedy Algorithm for the Max Weight $k$-colorable Subgraph Prob.}\label{alg:maxkcolor}
\begin{algorithmic}[1]
\State $I_1 = \cdots = I_k = \emptyset$
\For{$v \in V$ in decreasing order of weight}
	\If{$\exists ~ i: I_i \cup \{ v \} \textrm{ remains independent}$}
		\State $i = \min \left\{i : I_i \cup \{ v \} \textrm{ remains independent} \right\}$
		\State $I_i = I_i \cup \{ v \}$
	\EndIf
\EndFor
\State \Return $\bigcup_i I_i$
\end{algorithmic}
\end{algorithm}

In fact, Algorithm~\ref{alg:maxkcolor} is a DA-implementable greedy algorithm with good approximation guarantees:

\begin{theorem}\label{thm:maxkcolor}
Algorithm~\ref{alg:maxkcolor} is a DA-implementable $(1 - e^{- 1 / \alpha})$-approximation algorithm for the maximum weight $k$-colorable subgraph problem where $\alpha$ is $2 + \gamma$ for interval graphs, $(2 + \gamma)^2$ for disk graphs, and $d$ for bounded degree-$d$ graphs, for $\gamma = l_\max / l_\min$, i.e., the ratio between the maximum and minimum lengths (radii) for interval (disk) graphs.
\end{theorem}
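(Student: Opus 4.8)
The plan is to prove the statement in two independent parts: first that Algorithm~\ref{alg:maxkcolor} is DA-implementable in the sense of Definition~\ref{def:daimp}, and then that it attains the claimed ratio. For DA-implementability I would exhibit the scoring functions explicitly. Given an active set $A$ with removed set $R = N \setminus A$, let the scoring function internally run the coloring of Algorithm~\ref{alg:maxkcolor} on $R$ alone (ordering $R$ by decreasing weight, which is legitimate since the score may depend on $w_{N\setminus A}$), and set
\[ s^A_i(w_i, w_{N\setminus A}) = \begin{cases} w_i & \text{if } i \text{ can join some color class of that coloring,} \\ 0 & \text{otherwise.} \end{cases} \]
This is nondecreasing in $w_i$ (it is either the identity or identically zero, since the colorability of $i$ against $R$ does not depend on $w_i$), so it is a valid scoring function. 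The DA auction then removes, each iteration, the highest-weight vertex still colorable against $R$. I would argue by induction on the removal order that this reproduces Algorithm~\ref{alg:maxkcolor}: at every step $R$ equals the set the greedy has selected so far, the highest-scoring active vertex is exactly the vertex the greedy selects next, and---crucially---colorability against $R$ is monotone, since enlarging $R$ only adds constraints to each color class. Hence a vertex that once receives score $0$ keeps score $0$, so the vertices the greedy never selects are precisely those left in $A$ when all scores vanish, giving $N\setminus A = \bigcup_i I_i$ as required by item~2 of Definition~\ref{def:daimp}.

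The heart of the approximation bound is the observation that Algorithm~\ref{alg:maxkcolor} is \emph{exactly} the round-by-round greedy that peels off a greedy-by-weight maximum independent set $k$ times: since a vertex lands in $I_j$ iff it is rejected by $I_1,\dots,I_{j-1}$ and accepted by $I_j$, one checks that $I_1$ is the greedy-by-weight MWIS of $G$ and, inductively, $I_j$ is the greedy-by-weight MWIS of $G[V\setminus(I_1\cup\cdots\cup I_{j-1})]$. I would then view the max-weight $k$-colorable subgraph problem as monotone submodular (coverage) maximization under a cardinality-$k$ constraint, whose ground elements are independent sets and whose value on a family is the weight of its union; the best single marginal against a current cover $C$ is precisely the MWIS of the residual graph $G[V\setminus C]$. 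Two lemmas then combine: (i) greedy-by-weight is a $\tfrac1\alpha$-approximation for MWIS on any induced subgraph, where $\alpha$ is the \emph{local independence number}, the largest independent set inside any closed neighborhood; and (ii) the standard approximate-greedy analysis for submodular coverage, in which each round captures at least a $\rho$-fraction of the optimal marginal, yields a $(1-e^{-\rho})$ guarantee. Taking $\rho = \tfrac1\alpha$ gives $(1-e^{-1/\alpha})$.

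For lemma~(i) I would use the usual charging argument: every optimal vertex $o$ not taken by the greedy was rejected by an earlier, hence heavier, greedy neighbor $g$; charging each such $o$ to $g$ and each taken $o$ to itself sends to any $g$ only vertices of $\opt\cap N[g]$, an independent set of size at most $\alpha$ each of weight at most $w_g$, so $w(\opt)\le \alpha\,w(\mathrm{greedy})$. It then remains to bound $\alpha$ for the three graph classes by packing arguments: for interval graphs, pairwise-disjoint intervals meeting a fixed interval of length at most $l_\max$ number at most $2+\gamma$ (at most $\gamma$ fit strictly inside, plus the two straddling its endpoints); for disk graphs the analogous 2D area/packing estimate gives $(2+\gamma)^2$; and for bounded degree-$d$ graphs $|N(v)|\le d$ forces $\alpha\le d$. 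Since induced subgraphs only shrink $\gamma$ and the degree bound, these hold uniformly over all residual graphs, so lemma~(ii) applies with $\rho=1/\alpha$ in each case.

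The step I expect to be most delicate is the DA-implementability proof---verifying that the ``colorable-against-$R$'' selection rule with the prescribed scores genuinely reproduces the interleaved first-fit coloring of Algorithm~\ref{alg:maxkcolor} and halts at the right moment; the monotonicity of colorability under enlarging $R$ is the crux that makes this go through. On the approximation side the only subtle point is confirming that per-round greedy-MWIS captures a $\rho$-fraction of the \emph{true} optimal marginal (the residual MWIS), which is exactly what licenses the clean $(1-e^{-1/\alpha})$ bound in place of a weaker $1/\alpha$ factor.
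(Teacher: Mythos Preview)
Your proposal is correct and follows essentially the same route as the paper: the DA scoring functions are exactly the ones the paper writes down (score $w_i$ if $i$ is colorable against the partial coloring determined by $N\setminus A$, else $0$), and your approximation argument---recasting the problem as cardinality-constrained monotone submodular maximization, using greedy-by-weight as an $\alpha$-approximate MWIS oracle on each residual graph via a claw-free/local-independence charging, and plugging into the $(1-e^{-1/\alpha})$ approximate-oracle bound---is precisely the paper's Lemmas~\ref{lem:submod} and~\ref{lem:greedy} combined with the observation that \pref{alg:maxkcolor} and \pref{alg:maxkcolor2} coincide. Your explicit treatment of the monotonicity of colorability (once a vertex scores $0$ it stays $0$) is a detail the paper leaves implicit, but otherwise the arguments match.
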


\begin{proof}
(DA-implementability)
Note that Algorithm~\ref{alg:maxkcolor} upon termination returns a subgraph along with a valid $k$-coloring given by the independent sets $I_1, \ldots, I_k$. We show its implementation with the active set and scoring functions as follows: 
Let the active set $A$ be the vertices not yet selected and $I^{N \setminus A}_1, \ldots, I^{N \setminus A}_k$ be the associated independent sets consisting of vertices in $N \setminus A$. 
We define the scoring functions:
\begin{equation} \label{eq:maxkcolor}
s^A_v (w_v, w_{N \setminus A}) = \begin{cases}w(v), & \textrm{if $\exists i: I^{N \setminus A}_i \cup \{ v \}$ is independent} \\ 0, & \textrm{otherwise} \end{cases} \enspace.
\end{equation}

Note that each scoring function $s^A_v$ is nondecreasing in the first argument, and that the next element $v$ to be greedily selected into $N \setminus A$ (and out of $A$) is the highest scoring element $\argmax_v s^A_v$.

(Approximation) We defer this part of the proof to Section~\ref{subsec:maxkcolor}.
\end{proof}

Hence, our main result follows. In the special cases of the unit-interval and unit-disk graphs (i.e., $\gamma = 1$), we get constant approximations: 

\begin{corollary}\label{cor:spectrum}
The DA auction with scoring functions~\eqref{eq:maxkcolor} is a $(1 - e^{- 1 / \alpha})$-approximate WG-strategy-proof mechanism for the spectrum reallocation problem, where $\alpha$ is $2 + \gamma$ for interval graphs, $(2 + \gamma)^2$ for disk graphs, and $d$ for bounded degree-$d$ graphs; and $\gamma = l_\max / l_\min$.
\end{corollary}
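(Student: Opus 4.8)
The plan is to assemble the corollary from three ingredients whose substance already resides in Theorem~\ref{thm:maxkcolor}: the objective-preserving reduction from spectrum reallocation to the maximum weight $k$-colorable subgraph problem, the transfer of the approximation ratio across this reduction, and the incentive guarantee supplied by the deferred acceptance framework. Since the hard analytic work (DA-implementability and the $(1-e^{-1/\alpha})$ bound) is precisely the content of Theorem~\ref{thm:maxkcolor}, the corollary is essentially a matter of pinning down the correspondence between the auction's output and the greedy algorithm's output and then invoking Milgrom and Segal~\cite{MS14}.

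First I would make the reduction explicit at the level of outputs. Interpreting each bid $b_v$ as the reported value $w_v$ of retaining broadcast rights, the set of stations retaining their rights is exactly $N \setminus A$, where $A$ is the allocation returned by the DA auction of Algorithm~\ref{alg:da} instantiated with the scoring functions~\eqref{eq:maxkcolor}. By the DA-implementability established in Theorem~\ref{thm:maxkcolor}, this set $N \setminus A$ coincides with the set $\bigcup_i I_i$ produced by Algorithm~\ref{alg:maxkcolor}, and it is $k$-colorable by construction, so the induced subgraph $G(V \setminus A, E)$ satisfies the feasibility constraint of the reallocation problem. The social welfare of the retained stations equals $\sum_{v \in N \setminus A} w_v$, which is exactly the objective value of Algorithm~\ref{alg:maxkcolor} on the same weighted graph, so the two problems share a common optimum and a common achieved value.

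Next I would transfer the approximation guarantee. Because the DA auction is strategy-proof (an immediate consequence of the framework, recorded in the final step), truthful bidding is a dominant strategy, and on any instance the realized allocation coincides with the output of Algorithm~\ref{alg:maxkcolor} run on the true weights. Theorem~\ref{thm:maxkcolor} then gives $\sum_{v \in N \setminus A} w_v \ge (1 - e^{-1/\alpha})\,\opt$, where $\opt$ is the maximum weight of a $k$-colorable subgraph and hence the optimal attainable social welfare, with $\alpha$ taking the stated values $2+\gamma$, $(2+\gamma)^2$, and $d$ for interval, disk, and bounded degree-$d$ graphs respectively. This is precisely the claimed $(1 - e^{-1/\alpha})$-approximation for the spectrum reallocation problem.

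Finally, the incentive property follows from Milgrom and Segal~\cite{MS14}: every auction of the form of Algorithm~\ref{alg:da} with nondecreasing scoring functions and threshold payments~\eqref{eq:da} is WG-strategy-proof (and strategy-proof). Since Theorem~\ref{thm:maxkcolor} already verifies that the functions~\eqref{eq:maxkcolor} are nondecreasing in the first argument and depend only on $b_v$ and $b_{N \setminus A}$, the present mechanism is a legitimate DA auction and inherits WG-strategy-proofness directly. I expect the only delicate point to be the bookkeeping of the reduction's orientation, namely confirming that it is the ``rejected'' agents $N \setminus A$ of the procurement auction that carry the objective and feasibility constraints, and that the greedy assignment into $I^{N \setminus A}_1, \dots, I^{N \setminus A}_k$ is uniquely determined by the set $N \setminus A$ together with its weights, so that the scoring functions are well-defined in the sense required by Definition~\ref{def:daimp}. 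Once this correspondence is fixed, no genuinely hard technical obstacle remains, as the approximation analysis has been discharged in Theorem~\ref{thm:maxkcolor}.
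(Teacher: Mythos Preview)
Your proposal is correct and mirrors exactly how the paper derives the corollary: it is stated immediately after \pref{thm:maxkcolor} with only the remark ``Hence, our main result follows,'' relying on the objective-preserving reduction to the maximum weight $k$-colorable subgraph problem, the DA-implementability and approximation bound from \pref{thm:maxkcolor}, and the WG-strategy-proofness of DA auctions from~\cite{MS14}. Your write-up simply makes these three ingredients explicit, so there is nothing to add or correct.
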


\subsection{Proof of \pref{thm:maxkcolor} (Approximation)} \label{subsec:maxkcolor}

We show that \pref{alg:maxkcolor} achieves the stated approximation ratios for the interval graphs, disk graphs, and bounded degree-$d$ graphs by analyzing a related algorithm, \pref{alg:maxkcolor2}, which will be shown to be equivalent for a choice of $\alg$. We further reduce the maximum weight $k$-colorable subgraph problem to a monotone submodular function maximization problem. For a short review of submodular functions, we refer to \pref{app:submod}.

Let $M$ be the set of all independent sets of $G$ and $f: 2^M \rightarrow \RR$ be a set function defined as 
$f(S) = \sum_{v \in \cup_{I \in S} I} w(v)$. Note that $f(\emptyset) = 0$ and $f$ is a monotone submodular function. Since we want to find $k$ independent sets $I_1, \ldots, I_k$ (for $k$ colors) such that the total weight of vertices covered by them is maximized, the maximum weight $k$-colorable subgraph problem is equivalent to the maximization problem of 
\begin{equation}\label{eq:submod}
\max_{S \subseteq M : |S| \leq k} f(S) \enspace. 
\end{equation}

\begin{algorithm}[t]  
\caption{A Greedy Algorithm with Subroutine $\alg$} \label{alg:maxkcolor2}
\begin{algorithmic}[1]
\State $V' = V$
\For{$i = 1, \ldots, k$}
	\State Run $\alg$ on $G(V', E)$ and get an (approx.) max weight independent set $I_i$ \label{ln:indep}
	\State $V' = V' \setminus I_i$
\EndFor
\State \Return $\bigcup_i I_i$
\end{algorithmic}
\end{algorithm}

Note the $k$ independent sets should be disjoint to be a valid coloring, but as any $k$ independent sets can be modified to be disjoint, the above maximization problem still gives the same optimal value.

The well-known greedy algorithm due to Nemhauser et al.~\cite{NWF78}, that is \pref{alg:maxkcolor2} with the optimal $\alg$ that returns the maximum weight independent set, is a $(1 - e^{-1})$-approximation algorithm. However, it is not computationally efficient as the maximum weight independent set problem is difficult for general graphs; even the unweighted version is known to be NP-hard and cannot be approximated in polynomial time within a factor of $|V|^{1 - \epsilon}$ for any fixed $\epsilon > 0$, unless P = NP \cite{H99}. 

Instead, we use the following lemma (with its proof in \pref{app:submod}) to show that \pref{alg:maxkcolor2} with an approximation algorithm $\alg$ has a similar approximation guarantee, and design a computationally efficient $\alg$:

\begin{lemma}\label{lem:submod}
\pref{alg:maxkcolor2} with an $\alpha$-approximation algorithm $\alg$ is a $(1 - e^{- 1 / \alpha})$-approximation algorithm for the maximization problem~\eqref{eq:submod}.
\end{lemma}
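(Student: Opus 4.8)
The plan is to recognize \pref{alg:maxkcolor2} as the classical greedy algorithm for maximizing a monotone submodular function subject to a cardinality constraint, run with an \emph{approximate} oracle for the marginal-gain step, and then to adapt the standard Nemhauser--Wolsey--Fisher analysis so that it tolerates the oracle's $\alpha$-factor loss. Recall that the ground set is the family $M$ of independent sets of $G$, the objective is the function $f$ from \eqref{eq:submod}, and the constraint is $|S| \le k$. Writing $S_i$ for the collection of independent sets held after the $i$-th iteration (so $S_0 = \emptyset$) and $f_i = f(S_i)$, the entire argument reduces to establishing a one-step contraction $\opt - f_{i+1} \le (1 - \tfrac{1}{\alpha k})(\opt - f_i)$ and then unrolling it over the $k$ iterations.

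First I would pin down exactly what the marginal-gain step computes. For a current collection $S$ with covered vertex set $U = \bigcup_{J \in S} J$ and uncovered set $V' = V \setminus U$, the marginal gain of adding an independent set $I$ is $f(S \cup \{I\}) - f(S) = w(I \cap V')$. Since $I \cap V'$ is independent in $G(V', E)$ and, conversely, every independent set of $G(V', E)$ is an independent set of $G$, the maximum marginal gain over all $I \in M$ equals the maximum weight of an independent set of the induced subgraph $G(V', E)$. This is precisely the problem that $\alg$ solves in line~\ref{ln:indep}, and because $\alg$ returns a set $I_{i+1} \subseteq V'$, all of whose vertices are new, its marginal gain equals its weight $w(I_{i+1})$, which is at least $1/\alpha$ times the best possible marginal gain. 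Thus the $\alpha$-approximation guarantee for maximum weight independent set is exactly an $\alpha$-approximate marginal-gain oracle for $f$.

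Next I would carry out the standard submodular bookkeeping. Let $S^*$ be an optimal solution with $|S^*| \le k$ and value $\opt$. Monotonicity gives $\opt \le f(S_i \cup S^*)$, and submodularity lets me telescope $f(S_i \cup S^*) - f(S_i) \le \sum_{I \in S^*} \bigl(f(S_i \cup \{I\}) - f(S_i)\bigr)$, a sum of at most $k$ marginal gains, each bounded above by the maximum marginal gain with respect to $S_i$. Hence the best marginal gain at this step is at least $(\opt - f_i)/k$, so the approximate oracle guarantees $f_{i+1} - f_i \ge \tfrac{1}{\alpha k}(\opt - f_i)$, which rearranges into the claimed contraction. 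Unrolling over the $k$ iterations yields $\opt - f_k \le (1 - \tfrac{1}{\alpha k})^k\,\opt \le e^{-1/\alpha}\,\opt$, and therefore $f_k \ge (1 - e^{-1/\alpha})\,\opt$, as required.

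The main obstacle is not the recurrence, which is routine, but the oracle-correspondence step: I must verify carefully that maximizing the submodular marginal gain over the exponentially large family $M$ coincides with a single maximum weight independent set computation on the shrinking subgraph $G(V', E)$, and that the $\alpha$-approximation transfers with no extra loss because the returned set lies entirely in the uncovered region $V'$. A minor point to address is that the greedy automatically produces disjoint independent sets (it deletes $I_i$ from $V'$ each round), so its output is a genuine $k$-coloring and is consistent with the earlier remark that disjointness does not change the optimum of \eqref{eq:submod}.
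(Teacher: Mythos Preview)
Your proposal is correct and follows essentially the same approach as the paper: the standard Nemhauser--Wolsey--Fisher contraction argument adapted to an $\alpha$-approximate marginal-gain oracle, yielding $f_{i+1}-f_i \ge \tfrac{1}{\alpha k}(\opt - f_i)$ and hence the $(1-e^{-1/\alpha})$ bound after $k$ rounds. You in fact go slightly beyond the paper by explicitly verifying that the marginal-gain subproblem for $f$ coincides with maximum-weight independent set on $G(V',E)$ and that $\alg$'s output, being contained in $V'$, realizes its full weight as marginal gain; the paper leaves this correspondence implicit when it passes to the general statement about the abstract greedy algorithm.
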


For the classes of interference graphs in consideration, the following $\alg$ is a polynomial time approximation algorithm:
\begin{multline*} 
\alg := \textrm{Given graph $G$, select vertices in decreasing order of weight} \\ \textrm{as long as those selected form an independent set.} 
\end{multline*}

\begin{lemma}\label{lem:greedy}
Algorithm $\alg$ is an $\alpha$-approximation algorithm for the maximum weight independent set problem, where $\alpha$ is $2 + \gamma$ for interval graphs, $(2 + \gamma)^2$ for disk graphs, and $d$ for bounded degree-$d$ graphs; and $\gamma = l_\max / l_\min$.
\end{lemma}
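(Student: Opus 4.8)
The plan is to analyze $\alg$ with a standard weight-charging argument that reduces the approximation ratio to a purely local graph quantity, namely the largest number of pairwise non-adjacent neighbors of a single vertex. I would then bound this quantity separately for the three graph classes by elementary packing arguments, which is precisely where the parameters $\gamma$ and $d$ enter. Throughout, let $S$ be the independent set returned by $\alg$ and let $O$ be a maximum weight independent set; it suffices to show $w(O) \le \alpha \, w(S)$, which is equivalent to $\alg$ returning at least a $1/\alpha$ fraction of the optimum.

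First I would set up the charging using the decreasing-weight order in which $\alg$ processes vertices. Each $v \in O$ is charged to a vertex of $S$ as follows: if $v \in S$, charge $v$ to itself; otherwise $\alg$ rejected $v$ only because some neighbor $u$ had already been selected, and since $u$ was processed no later than $v$ we have $w(u) \ge w(v)$, so charge $v$ to such a $u$. Now fix $u \in S$. If $u \in O$, then since $O$ is independent no neighbor of $u$ lies in $O$, so $u$ receives only its own charge $w(u)$. If $u \notin O$, every vertex charged to $u$ is a neighbor of $u$ lying in $O$; these vertices are pairwise non-adjacent (being a subset of the independent set $O$) and each has weight at most $w(u)$. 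Writing $\beta(u)$ for the maximum number of pairwise non-adjacent neighbors of $u$, the total charge to $u$ is at most $\beta(u) \cdot w(u)$. Summing over $u \in S$ yields $w(O) \le (\max_u \beta(u)) \, w(S)$, so it remains to show $\max_u \beta(u) \le \alpha$ in each case.

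The bounds on $\beta(u)$ are where the geometry is used. For bounded degree-$d$ graphs it is immediate, since $u$ has at most $d$ neighbors and hence $\beta(u) \le d$. For interval graphs, let $u$ correspond to an interval of length at most $l_\max$; any family of pairwise non-adjacent (hence pairwise disjoint) intervals meeting $u$ splits into those containing the left endpoint of $u$, of which there is at most one by disjointness, and those whose left endpoint lies inside $u$. The latter are disjoint with length at least $l_\min$, so their left endpoints are $l_\min$-separated within a window of length at most $l_\max$, giving at most $\gamma + 1$ of them; thus $\beta(u) \le 2 + \gamma$. For disk graphs I would carry out the analogous packing argument in the plane: the centers of the pairwise non-adjacent disks meeting $u$ are mutually separated by at least $2 l_\min$ and all lie within a bounded disk around the center of $u$, and a packing estimate bounds their number by $(2 + \gamma)^2$, matching the square of the one-dimensional bound.

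The main obstacle is the disk case. Unlike the clean one-dimensional interval count, the two-dimensional packing estimate must be executed carefully to land on exactly $(2 + \gamma)^2$, since a crude area bound on the centers gives only a weaker constant. I would aim to recover the square of the interval bound by controlling both the region containing the blocking disks' centers and the minimum pairwise separation forced by disjointness, and the delicate point is verifying that these combine to give precisely $(2 + \gamma)^2$ rather than a looser ratio.
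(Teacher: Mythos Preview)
Your proposal is correct and matches the paper's approach: both use the same charging argument, reducing the ratio to the claw number $\max_u \beta(u)$ (the paper phrases it as the graph being $(\tau+1)$-claw free), and then bound this by packing for each graph class.

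One remark on the disk case: you flag it as the main obstacle and worry that ``a crude area bound \ldots\ gives only a weaker constant,'' but in fact the paper obtains $(2+\gamma)^2$ by exactly such a crude area bound. Take $u$ of radius at most $l_{\max}$; in the worst case the pairwise disjoint neighbors all have radius $l_{\min}$, and each such neighbor (the disk itself, not just its center) is contained in the ball of radius $l_{\max}+2l_{\min}$ about the center of $u$. Dividing areas gives $\pi(l_{\max}+2l_{\min})^2 / (\pi l_{\min}^2) = (2+\gamma)^2$ directly. So the disk case is no harder than the interval case once you pack the disks rather than their centers.
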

\begin{proof}
(Sketch) Note the graphs are bounded claw-free graphs; a graph is {\em $(\tau + 1)$-claw free} if each vertex has at most $\tau$ mutually non-adjacent vertices. More generally, $\alg$ is a $\tau$-approximation algorithm on $(\tau + 1)$-claw free graphs. For each vertex selected by the optimal algorithm, $\alg$ either selects it or does not select it in favor of another. For each vertex selected by $\alg$, it can be involved in at most $\tau$ such instances of the latter case. For details, see Appendix~\ref{app:greedy}. 
\end{proof}

Finally, note that \pref{alg:maxkcolor2} is equivalent to \pref{alg:maxkcolor}; we construct the independent sets one by one in the former and all at once in parallel in the latter. By Lemmas~\ref{lem:submod} and \ref{lem:greedy}, \pref{thm:maxkcolor} follows.

\section{Network Bandwidth Reallocation}\label{sec:network}

In this section, we show an approximate DA auction for the network bandwidth reallocation problem. We reduce the problem to the optimization problem of network unicast/multicast routing and show a primal-dual greedy algorithm due to Briest et al.~\cite{BKV11} is DA-implementable. 

The network bandwidth reallocation problem for unicast routing is defined as follows: 
Let $G = (V , E)$ be the network graph with $|V| = n$, $|E| = m$, and edge capacities $c(e), \forall e \in E$. Let $N$ be the firms (the bidders) with access rights such that each firm $i$ has a terminal pair $(s_i, t_i)$ and demand $d(i)$ and private value $v(i)$ for his right. Without loss of generality, we assume that $d(i) \in [0, 1], \forall i$; $c(e) \geq 1, \forall e$; and $C := \min_e c(e) > 1$. Given the reports of firms' values for their rights $b_1, b_2, \ldots$, we buy access rights of some firms such that the demands of those still retaining rights can be satisfied in the network, i.e., there is a feasible solution that routes each unsplittable flow of the demanded amount between terminals subject to the edge capacity constraints. We want to maximize the social welfare of those still holding rights. Note $G$ is the reduced smaller network after removing reallocated edges.

In the multicast routing version of the problem, each firm has a set of terminal vertices, with one being the source, and demands a (unsplittable) multicast tree (a.k.a., Steiner tree) spanning the terminals. 

For the corresponding optimization problem of network routing, we know the values $v(i), \forall i \in N$ and want to compute a subset of firms with the maximum total value such that their demands can be satisfied. Note that the objectives of both mechanism design and algorithmic problems are on the same set of firms. 

\pref{alg:network}, due to Briest et al.~\cite{BKV11}, is a polynomial time greedy algorithm based on the primal-dual scheme (see \pref{app:network} for the primal-dual linear programming relaxations). Let $\mathcal{S}_i$ be the set of all paths from $s_i$ to $t_i$ in $G$ and $\mathcal{S} = \bigcup_i \mathcal{S}_i$; in the multicast routing case, $\mathcal{S}_i$ is the set of all Steiner trees spanning the firm $i$'s terminals. Given $S \in \mathcal{S}_i$, let $q_S(e) = d(i)$ if $e \in S$, and $q_S(e) = 0$ otherwise. Note $\bar{e} \approx 2.718$ is the Euler number. In Line 3, we need to compute the shortest path with respect to the dual variables $y$ in the unicast routing case and the minimum weight Steiner tree in the multicast routing case. We can compute the shortest path exactly using any shortest path algorithm. For the NP-hard Steiner tree problem, we use the polynomial time 1.55-approximation algorithm due to Robins and Zelikovsky~\cite{RZ00}.

\begin{algorithm}[t]
\caption{Greedy Algorithm for the Network Routing Problem} \label{alg:network}
\begin{algorithmic}[1]
\State $\mathcal{T} = \emptyset$; $N' = N$; $y(e) = 1 / c(e), \forall e \in E$
\While{$N' \not = \emptyset$ and $\sum_{e \in E} c(e) y(e) < \bar{e}^{C - 1} m$}
	\State $S_i = \argmin_{S \in \mathcal{S}_i} \sum_{e \in S} y(e), \forall i \in N'$
	\State $i = \argmax_{i \in N'} \left\{ \frac{ v(i) }{ d(i) \cdot \sum_{e \in S_i} y(e) }  \right\}$
	\State $\mathcal{T} = \mathcal{T} \cup \{ S_i \}$; $N' = N' \setminus \{ i \}$
	\State Update $y(e) = y(e) \cdot \left( \bar{e}^{C - 1} m \right)^{q_{S_i}(e) / (c(e) - 1)}, \forall e \in S_i$
\EndWhile 
\State \Return $\mathcal{T}$
\end{algorithmic}
\end{algorithm}

We show that \pref{alg:network} is DA-implementable and obtain a DA auction for the network bandwidth reallocation problem:

\begin{theorem} \label{thm:network}
\pref{alg:network} is a DA-implementable $\left( \frac{\bar{e} \gamma C}{C - 1} m^{1 / (C - 1)} \right)^{-1}$-approximation algorithm for the network bandwidth reallocation problem where $\gamma = 1$ for unicast routing and $\gamma = 1.55$ for multicast routing.
\end{theorem}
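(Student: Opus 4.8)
The theorem makes two claims about Algorithm~\ref{alg:network}: that it is DA-implementable, and that it achieves the stated approximation ratio. The approximation guarantee is inherited directly from Briest et al.~\cite{BKV11}, together with the $1.55$ factor from the Robins--Zelikovsky Steiner tree approximation in the multicast case; so the real work here is the DA-implementability, and the plan is to match the structure of Algorithm~\ref{alg:network} against \pref{def:daimp}. The first step is to set up the active/inactive bookkeeping: let the active set $A$ be the firms not yet removed, so that $N \setminus A$ plays the role of $\mathcal{T}$'s chosen firms, and initialize $A = N$. The dual variables $y(e)$ are the helper variables that track progress; crucially, at any point $y$ is a deterministic function of which firms have already been removed (i.e., of $N \setminus A$ and their bids), since each removed firm $j$ contributes a fixed multiplicative update along its chosen path/tree $S_j$. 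This is exactly the regime the paper anticipates when it says the scoring functions ``use helper variables to track the progress of allocation.''

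Next I would write down the scoring function explicitly. For $v \in A$, set
\begin{equation*}
s^A_v(b_v, b_{N \setminus A}) = \begin{cases} \dfrac{b_v}{d(v) \cdot \sum_{e \in S_v} y(e)}, & \textrm{if the while-loop guard holds} \\ 0, & \textrm{otherwise,} \end{cases}
\end{equation*}
where $S_v = \argmin_{S \in \mathcal{S}_v} \sum_{e \in S} y(e)$ is firm $v$'s current shortest path (or approximate min-weight Steiner tree), and $y$ is computed from $b_{N \setminus A}$ as above. Two things must then be checked against the definition. First, monotonicity in the first argument: since $d(v) > 0$ and $\sum_{e \in S_v} y(e) > 0$ depend only on $b_{N \setminus A}$ and not on $b_v$, the score is linear and nondecreasing in $b_v$, and the threshold case is consistent. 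Second, the argmax selection: the firm removed in Line~4 is precisely $\argmax_{i \in N'} s^A_i$, and when the stopping condition $\sum_e c(e) y(e) \geq \bar{e}^{C-1} m$ (or $N' = \emptyset$) is reached, every active firm scores $0$, so the DA auction halts and returns $N \setminus A = \mathcal{T}$, matching condition~(2) of \pref{def:daimp}.

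The main obstacle I anticipate is the dependence-structure requirement baked into \pref{def:daimp} and Algorithm~\ref{alg:da}: a score $s^A_i$ may depend on $b_i$ and on $b_{N \setminus A}$, but \emph{not} on the bids of other currently-active firms. One must verify that $S_v$ and the whole vector $y$ genuinely depend only on the removed firms, which holds because $y$'s updates are triggered solely by firms entering $\mathcal{T}$ (equivalently, leaving $A$), and each update uses only that firm's own demand and chosen subgraph. The subtle point is the stopping condition: $\sum_e c(e) y(e)$ is a function of $y$ alone, hence of $N \setminus A$ alone, so making it part of each active firm's score does not smuggle in any dependence on other active bids. I would close by invoking the approximation bound of~\cite{BKV11} for the primal-dual scheme and substituting $\gamma = 1$ (exact shortest paths) or $\gamma = 1.55$ (\cite{RZ00}) to obtain the claimed $\left( \frac{\bar{e} \gamma C}{C-1} m^{1/(C-1)} \right)^{-1}$ ratio, completing both halves of the statement.
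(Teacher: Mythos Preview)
Your proposal is correct and follows essentially the same route as the paper: identify $A$ with $N'$, observe that the dual vector $y$ is determined by the already-removed firms (hence by $b_{N\setminus A}$), and take the ratio in Line~4 of \pref{alg:network} as the score, then cite \cite{BKV11} for the approximation bound. Your treatment is in fact slightly more careful than the paper's, since you explicitly build the while-loop guard into the scoring function via the $0$-case so that condition~(2) of \pref{def:daimp} is visibly satisfied; the paper's version of \pref{eq:network} omits this and leaves termination implicit.
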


\begin{proof}
(DA-implementability)
We show an implementation with an active set and scoring functions as follows:
Let the active set $A$ be the set of firms not yet selected, so $N'$ in \pref{alg:network}, and $\{ y^{N \setminus A}_e \}_{e \in E}$ be the associated dual variables. We define the scoring functions:
\begin{equation}\label{eq:network}
s^A_i(v_i, v_{N \setminus A}) = \frac{v(i)}{d(i) \sum_{e \in S_i} y^{N \setminus A}_e}, \textrm{ where $S_i = \argmin_{S \in \mathcal{S}_i} \sum_{e \in S} y^{N \setminus A}_e$}.
\end{equation}

Note $A$ changes when a firm is selected in Lines 4-5 and the scoring functions change correspondingly when the dual variables $y_e$ change. The scoring functions are nondecreasing in the first argument and functions of attributes of firms in $N \setminus A$. Also, the next firm to be added to $N \setminus A$ is the highest scoring firm. 

(Approximation) We refer this part of the proof to Briest et al.~\cite{BKV11}.
\end{proof}

\begin{corollary}\label{cor:network}
The DA auction with scoring functions~\eqref{eq:network} is a $\left( \frac{\bar{e} \gamma C}{C - 1} m^{1 / (C - 1)} \right)^{-1}$-approximate WG-strategy-proof mechanism for the network bandwidth reallocation problem, where $\gamma = 1$ for unicast routing and $\gamma = 1.55$ for multicast routing.
\end{corollary}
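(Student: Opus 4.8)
The plan is to obtain the corollary by composing the two halves of \pref{thm:network} with the general fact, due to Milgrom and Segal \cite{MS14}, that every deferred acceptance auction is weakly group strategy-proof. \pref{thm:network} already establishes that \pref{alg:network} is DA-implementable via the scoring functions \eqref{eq:network} and that it achieves the ratio $\left( \frac{\bar{e} \gamma C}{C - 1} m^{1 / (C - 1)} \right)^{-1}$. What remains is to repackage this as a statement about the \emph{mechanism} rather than the \emph{algorithm}: to argue that the auction inherits the approximation guarantee on the social welfare of the retained firms and that it is WG-strategy-proof.

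First I would verify that \eqref{eq:network} is a legitimate family of DA scoring functions in the sense of \pref{alg:da} and \pref{def:daimp}. Each $s^A_i(v_i, v_{N \setminus A})$ is linear and hence nondecreasing in its first argument $v_i$, and its remaining dependence enters only through the dual variables $\{ y^{N \setminus A}_e \}_{e \in E}$ and the minimizing structure $S_i$, both of which are determined entirely by the inactive set $N \setminus A$ and not by the bids of the other active firms. This is exactly the informational restriction demanded of DA scoring functions, so the mechanism with allocation rule obtained by running \pref{alg:da} on \eqref{eq:network} and payment rule \eqref{eq:da} is a bona fide DA auction. Milgrom and Segal's result then applies verbatim and delivers WG-strategy-proofness with no further work.

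For the approximation guarantee I would invoke the DA-implementability half of \pref{thm:network}: the allocation produced by the DA auction---the rejected set $N \setminus A$ of firms retaining their rights---coincides exactly with the set of firms selected by \pref{alg:network}, whose routing structures populate $\mathcal{T}$. Since the welfare objective of the reallocation problem is precisely the total value of these retained firms, and since by \pref{thm:network} this total is within the stated factor of the optimal feasible routing, the auction is $\left( \frac{\bar{e} \gamma C}{C - 1} m^{1 / (C - 1)} \right)^{-1}$-approximate. The values $\gamma = 1$ and $\gamma = 1.55$ for unicast and multicast routing carry over unchanged, the latter reflecting the $1.55$-approximation used for the minimum weight Steiner tree in Line 3.

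The one genuinely delicate point, which I would treat most carefully, is reconciling the two stopping rules. \pref{alg:network} halts when $N'$ empties \emph{or} when the potential $\sum_{e} c(e) y(e)$ reaches the threshold $\bar{e}^{C - 1} m$, whereas \pref{alg:da} halts only when every active score is nonpositive; since the scores \eqref{eq:network} stay strictly positive as long as a feasible path (tree) exists, the threshold must be folded into the scoring functions by forcing $s^A_i \equiv 0$ for all active $i$ once $\sum_e c(e) y^{N \setminus A}_e \geq \bar{e}^{C - 1} m$. Because this cutoff depends only on $N \setminus A$ and is applied uniformly across active firms, it keeps each $s^A_i$ nondecreasing in $v_i$ and preserves the DA structure, so it does not disturb WG-strategy-proofness; checking that this modified rule reproduces exactly the runs of \pref{alg:network} is the crux, though it is already subsumed by the DA-implementability asserted in \pref{thm:network}, on which the corollary is entitled to rely.
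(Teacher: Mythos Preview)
Your proposal is correct and matches the paper's approach: the paper states \pref{cor:network} without proof, treating it as the evident combination of \pref{thm:network} (DA-implementability plus the approximation bound) with Milgrom and Segal's general result that every DA auction is WG-strategy-proof. Your observation about reconciling the potential-threshold termination with the DA stopping rule is a genuine subtlety that the paper elides, and your proposed fix (zeroing all active scores once the threshold is crossed, which depends only on $N\setminus A$) is exactly the right way to make the DA-implementability claim in \pref{thm:network} precise.
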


\section{Cost Minimization with Set Cover Constraints}\label{sec:setcover}

We apply our approach to the cost minimization problem with set cover constraints in the selling auction setting.

In the selling auction setting, the DA auction (\pref{alg:da}) is ``inverted'': the while-loop's stopping condition becomes $\exists i \in A: s^A_i(b_i, b_{N \setminus A}) < \infty$ (so, it terminates when all the scores are $\infty$); the next agent to be rejected in Lines 4-5 becomes the lowest scoring agent $\argmin_{i \in A} s^A_i(b_i, b_{N \setminus A})$; and the payment rule~\eqref{eq:da} becomes $p_i(b_{-i}) = \min \{ b'_i \in B_i: i \in a(b'_i, b_{-i}) \}$. Similarly, each bidder wants to maximize his utility which is $v_i - p_i(b)$ if he wins, and $0$ otherwise. 

In the minimization problems, an auction is $\alpha$-approximate if it achieves cost at most $1 / \alpha$ times the optimal cost in all problem instances.

Let $N$ be the set of bidders with cost function $c: N \rightarrow \RR^+$ and $E$ be a universe of elements. For each $i \in N$, there is an associated set $S_i \subseteq E$; we assume each $e \in E$ is covered by, i.e., contained in, at least one $S_i$. For example, $N$ is a set of firms with private costs $c$ that are bidding $b_1, b_2, \ldots$ to prematurely terminate their contracts and $E$ is a representation of their responsibilities. We want to honor requests while rejecting some to ensure that all the responsibilities are covered by at least one rejected firm. We interpret $c(i)$ to be the value of early termination for firm $i$ and want to minimize the social cost, the total bid value, of those still with contracts.

The cost minimization problem with set cover constraints reduces to the well-known set cover problem with the same objective: given the cost function $c$, we want to select a subset of $N$ with minimum cost whose sets cover $E$. The set cover problem includes many other algorithm design problems such as the minimum spanning tree problem, the Steiner tree problem, etc.

\begin{algorithm}[t]
\caption{Greedy Algorithm for the Set Cover Problem} \label{alg:setcover}
\begin{algorithmic}[1]
\State $I = \emptyset$; $y(e) = 0, \forall e \in E$
\While{$\exists e \not \in \bigcup_{i \in I} S_i$}
	\State Increase $y(e)$ until there is some $i$ such that $\sum_{e' \in S_i} y(e') = c(i)$
	\State $I = I \cup \{ i \}$
\EndWhile
\State \Return $I$
\end{algorithmic}
\end{algorithm}

The set cover problem has primal-dual linear programming relaxations (see \pref{app:setcover}) and a polynomial time greedy approximation algorithm, \pref{alg:setcover}. We show that \pref{alg:setcover} is DA-implementable and obtain a DA auction:

\begin{theorem}
\pref{alg:setcover} is a DA-implementable $f^{-1}$-approximation algorithm for the set cover problem where $f = \max_e | \{ i: e \in S_i \} |$.
\end{theorem}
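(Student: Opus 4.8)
The plan is to establish DA-implementability of \pref{alg:setcover}, which amounts to producing an active set $A$ and scoring functions $s^A_i$ meeting the two requirements of \pref{def:daimp} (in the inverted, selling-auction form described at the start of this section). Recall that in the inverted setting the auction rejects the \emph{lowest}-scoring active agent each iteration and terminates when all active scores are $\infty$, returning $N \setminus A$ as the solution. Since \pref{alg:setcover} greedily selects sets into the cover $I$, I would identify the returned cover $I$ with $N \setminus A$, so the active set $A$ consists of the firms not yet chosen into the cover, and the dual variables $y(e)$ become helper state $\{y^{N\setminus A}_e\}_{e \in E}$ that track the progress of the dual-ascent.

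First I would exhibit the scoring function. The natural candidate mirrors the selection criterion of Line 3: each newly chosen $i$ is the set whose dual constraint $\sum_{e' \in S_i} y(e') = c(i)$ becomes tight first as the uncovered elements' duals are raised uniformly. Concretely, for $i \in A$ I would set
\begin{equation}\label{eq:setcoverscore}
s^A_i(c_i, c_{N \setminus A}) = \frac{c(i) - \sum_{e' \in S_i \cap U} y^{N \setminus A}_{e'}}{|S_i \cap U|},
\end{equation}
where $U = E \setminus \bigcup_{j \in N \setminus A} S_j$ is the set of currently uncovered elements, with the convention $s^A_i = \infty$ when $S_i \cap U = \emptyset$ (the set can cover no new element, so it should never be chosen). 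The numerator is the residual slack in $i$'s dual constraint and the denominator is the uniform rate at which raising all uncovered duals consumes that slack; thus $s^A_i$ is exactly the amount by which each uncovered $y(e)$ must increase before $i$'s constraint goes tight. Choosing the $\argmin$ of $s^A_i$ over active $i$ therefore selects precisely the set that the dual-ascent of Line 3 makes tight next, so requirement (1) of \pref{def:daimp} holds.

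Next I would verify the two structural conditions the framework demands. For monotonicity, observe that $s^A_i$ in \eqref{eq:setcoverscore} is nonincreasing in the bid $c(i)$ in the first argument — and since the inverted setting rejects lowest scores and defines the acceptance price as a $\min$, the correct monotonicity direction is that lower cost yields lower score, keeping the agent allocated; I would confirm this matches the inverted DA rule. For the termination/stopping condition (requirement (2)), note that when $U = \emptyset$ — i.e.\ $E$ is fully covered by $N \setminus A$ — every remaining active set has $S_i \cap U = \emptyset$ and hence score $\infty$, so the while-loop's inverted stopping condition ($\exists i \in A : s^A_i < \infty$ fails) triggers exactly when \pref{alg:setcover} halts, and the returned $N \setminus A$ equals the cover $I$. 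Finally I would note that the scores depend only on $c(i)$ and on the helper duals $y^{N\setminus A}$ determined by the already-rejected firms $N \setminus A$, not on the bids of other active firms, as \pref{def:daimp} requires.

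The main obstacle is the faithful bookkeeping of the helper variables: the duals $y^{N\setminus A}_e$ are not a closed-form function of $N \setminus A$ but are produced by the incremental ascent, so I must argue that they are well-defined given only the set $N \setminus A$ (independent of the order of removals) and that \eqref{eq:setcoverscore} reproduces the same tight-constraint order as Line 3 of \pref{alg:setcover}. I would handle this by an induction on iterations, showing that after each rejection the state $(U, \{y^{N\setminus A}_e\})$ coincides with the state of \pref{alg:setcover} after the corresponding greedy pick, so the two processes select the same sets in the same sequence; this, together with the monotonicity and stopping-condition checks above, completes DA-implementability, and the $f^{-1}$-approximation guarantee then follows from the standard primal-dual analysis of \pref{alg:setcover} (each element's dual is charged to at most $f$ chosen sets).
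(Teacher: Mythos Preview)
Your scoring function does not implement \pref{alg:setcover}. Line~3 of \pref{alg:setcover} raises the dual of a \emph{single} uncovered element $e$ until some constraint goes tight; it does not perform the uniform ascent over all uncovered elements that your formula~\eqref{eq:setcoverscore} models. These two processes can select different covers. Take $E=\{1,2\}$, $S_1=\{1,2\}$ with $c(1)=3$, $S_2=\{1\}$ with $c(2)=2$, $S_3=\{2\}$ with $c(3)=2$. Your score gives $s_1=3/2,\ s_2=2,\ s_3=2$, so $S_1$ is rejected first and the cover is $\{S_1\}$. But for \emph{any} choice of the initial element $e$ in \pref{alg:setcover}, the first tight set is $S_2$ or $S_3$, never $S_1$; so no run of \pref{alg:setcover} produces $\{S_1\}$. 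Hence your induction ``the two processes select the same sets in the same sequence'' fails at the base step. The paper's scoring function is simply the slack $c(i)-\sum_{e\in S_i} y^{N\setminus A}(e)$ (no denominator), and each time a bidder $i$ is rejected one associates a specific uncovered element $e_t\in S_i$ whose dual is then raised by exactly that slack; this \emph{is} a run of \pref{alg:setcover} with the element sequence $e_1,e_2,\dots$ so chosen.

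Two smaller issues. First, even for uniform ascent your numerator should be $c(i)-\sum_{e'\in S_i} y^{N\setminus A}_{e'}$, summing over all of $S_i$: elements already covered still carry nonzero duals that count toward $i$'s constraint. Second, your monotonicity discussion is garbled: your $s^A_i$ is \emph{nondecreasing} in $c(i)$ (not nonincreasing), and in the inverted auction the lowest-scoring agent is \emph{rejected} out of $A$, not ``kept allocated''; the paper requires exactly nondecreasing scores, which your function (once corrected) does satisfy. If your intent is merely to exhibit \emph{some} $f^{-1}$-approximate DA auction rather than to show \pref{alg:setcover} itself is DA-implementable, your uniform-ascent rule (with the numerator fixed) would do, but that is a different theorem than the one stated.
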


\begin{proof}
(DA-implementability)
We show an implementation with an active set and scoring functions as follows:
Let $A$ be the set of bidders not yet selected and $y^{N \setminus A}(e), \forall e \in E$ be the associated dual variables, dependent on $N \setminus A$, in \pref{alg:setcover}.
We define the scoring functions as follows:
\begin{equation} \label{eq:setcover}
s^A_i(c_i, c_{N \setminus A}) = \begin{cases} c(i) - \sum_{e \in S_i} y^{N \setminus A}(e), & \textrm{if $T^{N \setminus A}_i$ is not empty} \\ \infty, &\textrm{otherwise} \end{cases} \enspace, 
\end{equation}
where $T^{N \setminus A}_i = \{ e : e \in S_i, e \not \in \bigcup_{j \in N \setminus A} S_j \}$, i.e., those elements in $S_i$ not covered by the selected sets in $N \setminus A$. When a bidder $i$ is selected from $A$, there is an element $e \in T^{N \setminus A}_i$, the lowest indexed one if many exist, that we can associate to the bidder and increase the corresponding dual variable $y(e)$ by the amount $c(i) - \sum_{e' \in S_i} y^{N \setminus A}(e')$.

Note that the scoring functions are nondecreasing in the first argument. Assume the lowest scoring bidder $i_t$ is associated with the element $e_t \in T^{N \setminus A}_i$ at each iteration $t$. Then, the steps of the DA auction with the above scoring functions can be realized as the steps of \pref{alg:setcover} when the elements to be used in Line 3 are exactly the associated  elements $e_1, e_2, \ldots$, and the next bidder to be selected by \pref{alg:setcover} is the lowest scoring bidder at each iteration. 

(Approximation) The proof that \pref{alg:setcover} has the approximation ratio of $f$ can be found, for instance, in \cite{WS11}.
\end{proof}

\begin{corollary}
The (``inverted'') DA auction with scoring functions~\eqref{eq:setcover} is a $f^{-1}$-approximate WG-strategy-proof mechanism for the cost minimization problem with set cover constraints where $f = \max_e | \{ i: e \in S_i \} |$.
\end{corollary}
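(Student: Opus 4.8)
The plan is to prove the two assertions separately: that \pref{alg:setcover} is DA-implementable with the scoring functions \eqref{eq:setcover}, and that it achieves the ratio $f^{-1}$ (i.e.\ cost at most $f\cdot\opt$). The second part is the classical primal--dual analysis of greedy set cover, so the bulk of the effort goes into the first part, namely exhibiting a faithful step-by-step correspondence between one iteration of the inverted DA auction and one iteration of \pref{alg:setcover}.

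For DA-implementability I would first record the two structural requirements imposed by the DA framework: each $s^A_i$ must be nondecreasing in its first argument, and must depend only on bidder $i$'s own bid and the bids $c_{N\setminus A}$ of already-rejected bidders, never on the bids of other active bidders. The first is immediate from \eqref{eq:setcover}, since $c(i)-\sum_{e\in S_i} y^{N\setminus A}(e)$ is increasing in $c(i)$ while $\infty$ is constant. For the second I would make precise that the dual vector $y^{N\setminus A}$ and the residual sets $T^{N\setminus A}_i$ are functions of the selected set $N\setminus A$ (and of $\{c_j\}_{j\in N\setminus A}$) alone: the element raised when a set was added is its lowest-indexed uncovered element, so the entire dual assignment is pinned down by $N\setminus A$ together with the consistent global tie-break, independent of bidder $i$'s bid.

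Next I would establish the per-iteration equivalence. In the inverted auction the rejected bidder is $i_t=\argmin_{i\in A} s^A_i$, i.e.\ the active set of smallest slack among those with $T^{N\setminus A}_i\neq\emptyset$; in \pref{alg:setcover} the added set is the first to go tight as some uncovered $y(e)$ is raised. The key step is to show these coincide: choosing the associated element $e_t\in T^{N\setminus A}_{i_t}$ in Line~3 and raising $y(e_t)$ by exactly the slack $s^A_{i_t}=c(i_t)-\sum_{e\in S_{i_t}} y^{N\setminus A}(e)$ makes $i_t$ tight, and no set goes tight earlier. For the latter I would argue that any other set $j$ containing $e_t$ also has $e_t$ uncovered, hence $T^{N\setminus A}_j\neq\emptyset$ and $j$ carries a finite score $s^A_j\ge s^A_{i_t}$; since raising $y(e_t)$ lowers the slacks of exactly the sets containing $e_t$, the set $i_t$ is the first (or, under ties, lowest-indexed) to reach zero slack, so \pref{alg:setcover} adds $i_t$. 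Finally I would match the termination criteria: all scores equal $\infty$ iff every $T^{N\setminus A}_i$ is empty iff every element of $E$ is covered by $N\setminus A$, which is exactly the stopping condition of \pref{alg:setcover}; here I invoke the hypothesis that each $e\in E$ lies in some $S_i$, so an uncovered element forces an active set with nonempty residual and finite score. Chaining the iterations shows the inverted auction returns the same cover $I=N\setminus A$ as \pref{alg:setcover}. For the approximation ratio I would then run the textbook primal--dual argument: $y$ stays dual-feasible throughout (we stop at the first tightness and never revisit an element after its set is added), each $i\in I$ satisfies $\sum_{e\in S_i} y(e)=c(i)$, so $\sum_{i\in I} c(i)=\sum_{e\in E} y(e)\,|\{i\in I:e\in S_i\}|\le f\sum_{e\in E} y(e)\le f\cdot\opt$ by weak duality; I would cite \cite{WS11} rather than reproduce the details.

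I expect the main obstacle to be the per-iteration equivalence together with the well-definedness of the scores as functions of $N\setminus A$ only. The delicate point is that $y^{N\setminus A}$ is produced by a sequential process whose outcome could a priori depend on the insertion order and on the element chosen at each tightening; pinning it down as a genuine function of the set $N\setminus A$ (via the lowest-index associated element and the global tie-break) is what legitimizes writing $s^A_i(c_i,c_{N\setminus A})$ and makes \eqref{eq:setcover} a bona fide family of DA scoring functions. Everything else---the monotonicity, the tightness bookkeeping, and the approximation bound---is routine.
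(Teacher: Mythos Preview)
Your proposal is correct and follows essentially the same route as the paper: establish DA-implementability of \pref{alg:setcover} via the scoring functions~\eqref{eq:setcover} by exhibiting a one-to-one correspondence between an inverted-DA iteration and a step of the primal--dual greedy (choose the lowest-indexed uncovered element, raise its dual until the minimum-slack set goes tight), and then invoke the standard $f$-approximation bound from~\cite{WS11}. Your write-up is in fact more careful than the paper's on two points the paper leaves implicit: (i) that the dual vector $y^{N\setminus A}$ and the residual sets $T^{N\setminus A}_i$ are determined by $N\setminus A$ and $c_{N\setminus A}$ alone (so \eqref{eq:setcover} is a legitimate DA scoring family), and (ii) that every active set containing the associated element $e_t$ has finite score at least $s^A_{i_t}$, so no other set goes tight before $i_t$.

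There is one small omission relative to the statement you are asked to prove. The Corollary asserts \emph{WG-strategy-proofness}, which is not a consequence of the approximation analysis or of DA-implementability per se; it is inherited from the general DA auction framework of Milgrom and Segal~\cite{MS14} once you have shown the allocation rule is realized by a DA auction with valid scoring functions. The paper treats this as immediate (the Corollary is stated without proof, parallel to Corollaries~\ref{cor:spectrum} and~\ref{cor:network}), but since you never mention the incentive property, you should add a closing sentence invoking~\cite{MS14} to conclude WG-strategy-proofness from DA-implementability.
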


\paragraph{\bf Acknowledgments.}
We would like to thank Vasilis Gkatzelis, Afshin Nikzad, and Amin Saberi for their helpful discussions.

\bibliographystyle{abbrv}
\bibliography{refs}

\appendix

\section{Submodular Function Maximization}\label{app:submod}
We review the basics of submodular functions and give the proof of Lemma~\ref{lem:submod} in Section~\ref{subsec:submod}. 

Let $M$ be a finite ground set and $f: 2^M \rightarrow \RR$ be a set function with $f(\emptyset) = 0$. $f$ is {\em submodular} if  
\[ f(A) + f(B) \geq f(A \cup B) + f(A \cap B), \; \forall A, B \subset M, \]
or equivalently,
\[ f(A \cup \{ i \}) - f(A) \geq f(B \cup \{ i \}) - f(B), \; \forall A \subset B, i \in M. \]
We say that $f$ is {\em monotone} if $f(A) \leq f(B)$ whenever $A \subseteq B$.
To simplify notations, we define $f_S(i)$ to be the {\em marginal value} of element $i$ to set $S$: $f_S(i) = f(S \cup \{ i \}) - f(S)$. Analogously, let $f_S(T)$ for a set $T$ to be $f_S(T) = f(S \cup T) - f(S)$.

Consider the following general maximization problem:
\begin{equation} \label{prob:submod}
 \max_{S \subseteq M : |S| \leq k} f(S) \enspace, 
\end{equation}
where $f$ is a monotone submodular function.

It is well-known that \pref{alg:submodular} due to Nemhauser et al.~\cite{NWF78} is a good approximation algorithm to the above maximization problem: 

\begin{algorithm}[t]
\caption{Greedy Algorithm for the Maximization Prob. \eqref{prob:submod}}\label{alg:submodular} 
\begin{algorithmic}[1]
\State $S = \emptyset$
\For{$j = 1, \ldots, k$}
	\State Let $i = \argmax_{i \in M \setminus S} \{ f(S \cup \{ i \}) - f(S) \}$ \label{alg:submodular:ln:oracle}
	\State $S = S \cup \{ i \}$ 
\EndFor
\State \Return $S$
\end{algorithmic}
\end{algorithm}

\begin{theorem}[Nemhauser et al.~\cite{NWF78}] \label{thm:submod}
Algorithm~\ref{alg:submodular} is a $(1 - e^{-1})$-approximation algorithm for the maximization problem $\max_{|S| \leq k} f(S)$ when $f: 2^M \rightarrow \RR$ with $f(\emptyset) = 0$ is a monotone submodular function.
\end{theorem}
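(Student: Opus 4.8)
The plan is to track the value of the greedy solution against the optimum through a single one-step progress bound and then unroll the resulting recurrence. Let $S_0 = \emptyset \subseteq S_1 \subseteq \cdots \subseteq S_k$ denote the nested sets produced after successive iterations of \pref{alg:submodular}, and let $O$ be an optimal solution, so that $f(O) = \opt$ and $|O| \leq k$. First I would establish the key inequality asserting that at every step the greedy gain recovers at least a $1/k$ fraction of the remaining gap to the optimum, namely
\[
 f(S_{j+1}) - f(S_j) \geq \tfrac{1}{k}\bigl(\opt - f(S_j)\bigr).
\]

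To prove this step I would begin from monotonicity to write $\opt = f(O) \leq f(O \cup S_j) = f(S_j) + f_{S_j}(O)$, so that $\opt - f(S_j) \leq f_{S_j}(O)$. Enumerating the elements of $O$ and telescoping their marginal contributions, the decreasing-marginals form of submodularity gives $f_{S_j}(O) \leq \sum_{o \in O} f_{S_j}(o)$, since the marginal of each element with respect to a larger set is dominated by its marginal with respect to $S_j$. Each single-element marginal $f_{S_j}(o)$ is in turn at most the marginal of the element chosen at iteration $j+1$, because the greedy selection step of \pref{alg:submodular} picks precisely the maximizer of $f(S_j \cup \{i\}) - f(S_j)$; hence $\sum_{o \in O} f_{S_j}(o) \leq k\,(f(S_{j+1}) - f(S_j))$. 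Chaining these bounds yields the claimed one-step inequality.

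With the progress bound in hand, I would set $\delta_j = \opt - f(S_j)$ and rewrite the inequality as $\delta_{j+1} \leq (1 - 1/k)\,\delta_j$. Iterating from $\delta_0 = f(\emptyset) = \opt$ gives $\delta_k \leq (1 - 1/k)^k\,\opt \leq e^{-1}\,\opt$, using the elementary estimate $(1 - 1/k)^k \leq e^{-1}$. Rearranging $\delta_k = \opt - f(S_k)$ then yields $f(S_k) \geq (1 - e^{-1})\,\opt$, which is exactly the claimed approximation guarantee.

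I expect the main obstacle to be the one-step progress inequality, as this is where both structural hypotheses must be combined: monotonicity is needed to pass from $f(O)$ to $f(O \cup S_j)$, while submodularity is needed to split the marginal value of the entire optimal set into a sum of single-element marginals that the greedy rule can then dominate term by term. By comparison, the concluding geometric recurrence and the bound $(1 - 1/k)^k \leq e^{-1}$ are routine.
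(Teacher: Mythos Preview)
Your proof is correct and follows essentially the same argument as the paper: establish the one-step progress bound $f(S_{j+1}) - f(S_j) \ge \tfrac{1}{k}\bigl(f(O)-f(S_j)\bigr)$ by combining monotonicity with the decreasing-marginals form of submodularity and the greedy choice, then unroll the geometric recurrence and apply $(1-1/k)^k \le e^{-1}$. The paper in fact proves the slightly more general \pref{lem:submodularapprox} (with an $\alpha$-approximate oracle) and recovers this theorem as the case $\alpha=1$; your only slip is the typo ``$\delta_0 = f(\emptyset) = \opt$'', which should read $\delta_0 = \opt - f(\emptyset) = \opt$.
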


\subsection{Proof of Lemma~\ref{lem:submod}}\label{subsec:submod}

We prove the following more general statement which subsumes Theorem~\ref{thm:submod} as a special case. Let an {\em $\alpha$-approximate oracle} in Line 3 of Algorithm~\ref{alg:submodular} be an algorithm that returns any $i \in M \setminus S$ such that $f(S \cup \{ i \}) - f(S) \geq \frac{1}{\alpha} \cdot \max_{i \in M \setminus S} \{ f(S \cup \{ i \}) - f(S) \}$ : 


\begin{lemma}\label{lem:submodularapprox}
Algorithm~\ref{alg:submodular} with an $\alpha$-approximate oracle in Line~\ref{alg:submodular:ln:oracle} is a $(1 - e^{- 1 / \alpha})$-approximation algorithm for any monotone submodular function maximization problem~\eqref{prob:submod}.
\end{lemma}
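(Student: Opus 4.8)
The plan is to follow the classical Nemhauser--Wolsey--Fisher argument while tracking how the per-step loss factor $1/\alpha$ compounds into the $e^{-1/\alpha}$ guarantee. Fix an optimal solution $S^*$ with $f(S^*) = \opt$ and $|S^*| \le k$, and let $S_0 = \emptyset, S_1, \ldots, S_k$ be the iterates produced by Algorithm~\ref{alg:submodular}, where $S_{j+1} = S_j \cup \{ i_{j+1} \}$ and $i_{j+1}$ is the element returned by the $\alpha$-approximate oracle in step $j+1$. I would define the residual $\delta_j = \opt - f(S_j)$ and aim to show $\delta_k \le e^{-1/\alpha}\, \opt$, which immediately gives $f(S_k) \ge (1 - e^{-1/\alpha})\opt$.

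First I would establish the standard submodularity bound relating the optimum to a single best marginal. Using monotonicity and then the subadditivity of marginals,
\[ \opt \le f(S^* \cup S_j) = f(S_j) + f_{S_j}(S^*) \le f(S_j) + \sum_{e \in S^*} f_{S_j}(e) \enspace, \]
where the last step follows from submodularity, and monotonicity ensures each $f_{S_j}(e) \ge 0$. Since marginals of elements already in $S_j$ vanish and $|S^*| \le k$, the sum is at most $k \max_{e \in M \setminus S_j} f_{S_j}(e)$, yielding $\delta_j \le k \max_{e \in M \setminus S_j} f_{S_j}(e)$.

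Next I would invoke the oracle guarantee. By definition the chosen element satisfies $f_{S_j}(i_{j+1}) \ge \frac{1}{\alpha} \max_{e \in M \setminus S_j} f_{S_j}(e)$, so combining with the previous bound gives the one-step progress inequality
\[ f(S_{j+1}) - f(S_j) = f_{S_j}(i_{j+1}) \ge \frac{1}{k\alpha}\, \delta_j \enspace. \]
Rewriting in terms of residuals, $\delta_{j+1} = \delta_j - \bigl(f(S_{j+1}) - f(S_j)\bigr) \le \bigl(1 - \tfrac{1}{k\alpha}\bigr)\delta_j$. Finally, since $\delta_0 = \opt$, unrolling this recurrence by induction gives $\delta_k \le \bigl(1 - \tfrac{1}{k\alpha}\bigr)^k \opt$, and the elementary inequality $1 - x \le e^{-x}$ bounds $\bigl(1 - \tfrac{1}{k\alpha}\bigr)^k \le e^{-1/\alpha}$, completing the argument.

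The one delicate point I expect to watch is the submodularity step: the oracle's maximum is taken over $M \setminus S_j$, which contains $S^* \setminus S_j$, so the best available marginal genuinely dominates the average marginal over the optimal elements; elements of $S^* \cap S_j$ contribute zero marginal and are harmless. Everything else is routine arithmetic, and setting $\alpha = 1$ recovers Theorem~\ref{thm:submod} as the special case.
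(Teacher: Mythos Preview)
Your proposal is correct and follows essentially the same approach as the paper: both establish the one-step progress inequality $f(S_{j+1}) - f(S_j) \ge \frac{1}{\alpha k}\bigl(f(O) - f(S_j)\bigr)$ via monotonicity, submodularity, and the oracle guarantee, then unroll the resulting geometric recurrence and bound $(1 - \tfrac{1}{\alpha k})^k \le e^{-1/\alpha}$. The only cosmetic difference is that the paper states the two steps as separate claims, whereas you present them in a single pass using the residual $\delta_j$.
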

\begin{proof}
Let $i_j$ be the element selected in the $j$-th iteration and $S_j = \{ i_1, \ldots, i_j \}$, with $S_0 = \emptyset$. Let $O = \{ o_1, \ldots, o_k \}$ be the optimal solution.

\begin{claim}
For $0 \leq j \leq k-1$, $f(S_{j+1}) - f(S_j) \geq \frac{1}{\alpha k} (f(O) - f(S_j))$.
\end{claim}
\begin{proof}
Note that
\begin{align*}
f_{S_j}(O) & \leq \sum_{o \in O \setminus S_j} f_{S_j}(o)  \\
	& \leq | O \setminus S_j | \cdot \max_{o \in O \setminus S_j} f_{S_j}(o) \\
	& \leq k \cdot \max_{o \in O \setminus S_j} f_{S_j}(o) \\
	& \leq \alpha k \cdot f_{S_j}(i_{j+1}) \\
	& = \alpha k \cdot (f(S_{j+1}) - f(S_j)),
\end{align*}
where the first inequality follows from submodularity and the last one from the $\alpha$-approximate oracle.
Further, we note that $f_{S_j}(O) = f(S_j \cup O) - f(S_j) \geq f(O) - f(S_j)$ by monotonicity. Putting the two inequalities together, we get the claim.
\end{proof}

\begin{claim}
For $1 \leq j \leq k$, $f(O) - f(S_j) \leq \left( 1 - \frac{1}{\alpha k} \right)^j f(O)$.
\end{claim}
\begin{proof}
We prove this using induction. The base case is straightforward to check. For $j > 1$, note that
\begin{align*}
f(O) - f(S_j) & = f(O) - f(S_{j-1}) - [ f(S_j) - f(S_{j-1}) ] \\
	& \leq [ f(O) - f(S_{j-1}) ] \left( 1 - \frac{1}{\alpha k} \right).
\end{align*}
\end{proof}

From the claims, it follows that $f(O) - f(S_k) \leq ( 1 - \frac{1}{\alpha k} )^k f(O) \leq e^{- 1 / \alpha} f(O)$. Equivalently, $( 1 - e^{- 1 / \alpha} ) f(O) \leq f(S_k)$.
\end{proof}

\section{Proof of Lemma~\ref{lem:greedy}}\label{app:greedy}

Note that in the special cases of the unit-interval and unit-disk graphs (i.e., $\gamma = 1$), we get constant approximations, $3$ and $9$, respectively. We can further improve the approximation ratios to $2$ and $5$ by a more direct packing argument (e.g., \cite{MBH+95} for unit-disk graphs).

\begin{proof}
(Interval graphs)
For simplicity, we assume vertices (i.e., the intervals) do not coincide; if two do coincide, then we only keep the one with greater weight. 
Let $I_\alg$ and $I_\opt$ be the independent sets returned by $\alg$ and the optimal algorithm. 
It is sufficient to consider each $v \in I_\opt$ that is not in $I_\alg$. 
Note that $v$ intersects some vertex in $I_\alg$, otherwise it would be added to the solution by the greedy algorithm. 
Then, there exists $v' \in I_\alg$ that intersects $v$ such that $w(v') \geq w(v)$.
Finally, note that each $v' \in I_\alg$ can by intersected by at most $2 + \gamma$ vertices in $I_\opt$. 
It is sufficient to show that the interval graphs are $(2 + \gamma)$-claw free graphs. In the worst case, an interval of length $l_\max$ intersects mutually disjoint intervals of length $l_\min$. Since the intersecting intervals are packed inside a horizontal range of length $2 l_\min + l_\max$, there can be at most $\frac{2 l_\min + l_\max}{l_\min} = 2 + \gamma$ of them. 

(Disk graphs)
Essentially, the same argument works by noting that the disk graphs are $(2 + \gamma^2)$-claw free. In the worst case, a disk of radius $l_\min$ intersects mutually disjoint disks of radius $l_\min$. The smaller disks are contained in a circular range of radii $2 l_\min + l_\max$ around the center of the bigger disk. Hence, there are at most $\frac{\pi \cdot (2 l_\min + l_\max)^2}{\pi \cdot {l_\min}^2} = (2 + \gamma)^2$ smaller disks. 

(Bounded degree-$d$ graphs)
A bounded degree-$d$ graph is $(d+1)$-claw free by definition, and the argument works directly.
\end{proof}

\section{LP Relaxations for the Network Routing Problem}\label{app:network}

The following is the primal-dual linear programming relaxations for the network unicast/multicast routing problem, stated in \pref{sec:network}:

\begin{align*}
\textstyle \max_x ~~ & \textstyle \sum_i v(i) \cdot \left( \sum_{S \in \mathcal{S}_i} x(S) \right) \\ 
& \textstyle \sum_{S : S \in \mathcal{S}, e \in S} q_S(e) x(S) \leq c(e), \quad \forall e \in E \\
& \textstyle \sum_{S \in \mathcal{S}_i} x(S) \leq 1, \quad \forall i \\
& x \geq 0 & & \\
\\
\textstyle \min_{y,z} ~~ & \textstyle \sum_e c(e) y(e) + \sum_i z(i) \\
& \textstyle z(i) + \sum_{e \in S} q_S(e) y(e) \geq v(i), \quad \forall i, S \in \mathcal{S}_i \\
& z, y \geq 0 \\
\end{align*}

\section{LP Relaxations for the Set Cover Problem}\label{app:setcover}

The following is the primal-dual linear programming relaxations for the set cover problem, stated in \pref{sec:setcover}:

\begin{align*}
\textstyle \min_x ~~ & \textstyle \sum_i c(i) x(i) & 
\textstyle \max_y ~~ & \textstyle \sum_e y(e) \\
& \textstyle \sum_{i : e \in S_i} x(i) \geq 1, \quad \forall e \in E & 
& \textstyle \sum_{e : e \in S_i} y(e) \leq c(i), \quad \forall i \in N \\
& x \geq 0 & & y \geq 0
\end{align*}

\end{document}